 \newtheorem{thm}{Theorem}
 \newtheorem{lemma}{Lemma}
 \newtheorem{cor}{Corollary}
\DeclareMathOperator{\Tr}{\operatorname{Tr}}
\DeclareMathOperator{\LL}{\operatorname{\mathcal{L}}}
\DeclareMathOperator{\e}{\operatorname{e}}
\begin{document}
  
  \title{Quantum state driving: measurements versus pulses}
  
\author{Yi-Hsiang Chen}
\email{yihchen@isi.edu}
\affiliation{Information Sciences Institute, University of Southern California, Marina del Rey, CA 90292, USA}
\affiliation{Department of Physics and Astronomy, and Center for Quantum Information Science \& Technology,University of Southern California, Los Angeles, California 90089, USA}


\begin{abstract}
The quantum Zeno effect is well-known for fixing a system to an eigenstate by frequent measurements. It is also known that applying frequent unitary pulses induces a Zeno subspace that can also pin the system to an eigenspace. Both approaches have been studied as means to \emph{maintain} a system in a certain subspace. Extending the two concepts, we consider making the measurements/pulses dynamical so that the state can \emph{move} with the motion of the measurement axis/pulse basis. We show that the system stays in the dynamical eigenbasis when the measurements/pulses are slowly changing. Explicit bounds for the apply rate that guarantees a success probability are provided. In addition, both methods are inherently resilient against non-Markovian noise.  Finally, we discuss the similarities and differences between the two methods and their connection to adiabatic quantum computation.
\end{abstract}

\maketitle

 \section{Introduction}
Frequent measurements can fix a state to an eigenstate of the measurement observable. It is known as the quantum Zeno effect (QZE) \cite{Sudarshan1977,Nori2008,PhysRevLett.100.090503,Herrera2014,PhysRevLett.112.070404}. It is often considered an approach of error prevention or suppression for quantum systems \cite{Paz-Silva2012,Vaidman_1996,Chen2020}.  A closely related idea that uses frequent strong unitary pules to average out unwanted noise has been well developed in the area of dynamical decoupling (DD) \cite{DD1998,Viola2002}. Similar to measurements, applying frequent pulses can also create a stabilization for a quantum system. This connection to the QZE is well illustrated in \cite{Facchi2004}, where it shows the frequent pulses induce a Zeno subspace \cite{Facchi2002} which eliminates transitions to other subspaces and provides a freezing effect similar to the QZE. 

  Since frequent measurements pin a state to the measurement basis, one can expect that if the measurement is slowly changing, the state can be steered with it. An experiment of this effect can be found in \cite{Siddiqi2018}, where the authors considered a circuit-QED setting that demonstrates a steering of a superconducting qubit by dynamically changing the measurement operator in a rate slower than the measurement-induced dephasing rate. This has also been analyzed theoretically \cite{AntiZeno2000,Pechen2006,Shuang2008}. In particular, \cite{AntiZeno2000} has shown that a state continuously measured with a time-dependent projector remains in the projected space at all times in the limit of infinite measurement rate. However, a quantitative bound for the required measurement rate that guarantees a given success probability, in a generic setting, is lacking.  On the other hand, frequent unitary pulses induce a Zeno subspace in the unitary's spectrum. If the pulses are slowly changing, the induced subspace will also be time-varying, which drives the system in a similar manner as a slowly changing measurement does. This type of quantum state manipulation via dynamical pulses has not been fully explored and is the other main topic in this paper.
  
  We first present the measurement control method with Theorem \ref{thm1} that explains the setting and result. In contrast to existing works, which often consider only the infinite measurement rate or relatively specific systems, Theorem \ref{thm1} provides an exact bound for the success probability for an arbitrary system that is subjected to an external noise. In particular, we show how high the measurement rate should be to achieve a desired success probability. Following the Zeno steering approach, we demonstrate that the same quantum state driving can also be accomplished by frequent pulses. The result is based on the effect of a fast phase averaging over a function, which can be realized as an extension of the mean ergodic theorem. The mathematical details are provided in Supplemental Material. We found that the required pulse rate is in general higher than the required measurement rate for the same accuracy, although pulses are often considered a less expensive operation than measurements.  We discuss the similarities and differences between the two and argue that both can be realized as an adiabatic quantum computing task \cite{RevModPhys.90.015002}.

  \section{Zeno steering}
 The essence of the quantum Zeno effect can be seen from a simple two-level example. Suppose a state, initially in $|0\rangle$, evolves according to $\exp(-iHt)$ with some Hamiltonian and is measured in the $\{|0\rangle,|1\rangle\}$ basis for every interval $\tau/N$. For each measurement, the probability of getting the outcome $|0\rangle$ given the state was in $|0\rangle$ is $|\langle 0|\exp(-iH\tau/N)|0 \rangle|^2=1-\mathcal{O}(\tau^2/N^2)$, implying the probability that the state never left $|0\rangle$ is $1-\mathcal{O}(\tau^2/N)$, after $N$ measurements over the interval $\tau$. If $N$ is large enough, the state can be fixed at $|0\rangle$ with a probability arbitrarily close to 1. Moreover, this holds true for more general cases where the Hamiltonian jointly acts on the system with its environment. In these cases, the frequent measurements are often considered an error-suppression technique that aims to preserve the system from external disturbance. 
	
	Instead of using Zeno effect to keep the state from evolving, we consider \emph{time-dependent} frequent measurements that drive the state along the measurement axis. Suppose a system $S$ is frequently measured by a sequence of $N$ measurements with projectors $\{\Pi_0,\Pi^{\perp}_0\},$ $\{\Pi_1,\Pi^{\perp}_1\},\dots,$ where each measurement axis is slightly rotated from the previous one, i.e., $\Pi_{j+1}=\e^{iK_j dt}\Pi_j \e^{-iK_jdt}$ with some generator $K_j$. If the system is initially in the projected space of $\Pi_0$, it becomes in the projected space of $\Pi_N$, in the large $N$ limit. A stronger result also holds: the probability that the system always stays in the $\Pi_j$ space approaches 1. If the system is subjected to an interaction with an environment $B$ while being frequently measured, the probability of the system being in the final projected space $\Pi_N$ can still approach 1. The following theorem summarizes these results.

 \begin{thm}\label{thm1}
Suppose a system with Hilbert space $\mathcal{H}_S$ is interacting with an environment with Hilbert space $\mathcal{H}_B$ via a joint Hamiltonian $H_{SB}\in \mathcal{B}(\mathcal{H}_S\otimes\mathcal{H}_B)$.  Suppose we perform a sequence of $N$ projective measurements $\{\Pi_j,\Pi_{j}^{\perp}\}_{j=1}^{N}$ on the system with intervals $\tau/N$, where $\Pi_j=\Pi_j\otimes I_B$ are some projectors acting only on the system. The projector at the $(j+1)$th step is related to the projector at the $j$th step by
\begin{equation}
\Pi_{j+1}=\e^{iK_j \frac{\tau}{N}}\Pi_j \e^{-iK_j \frac{\tau}{N}} 
\end{equation}
with some Hermitian operators $K_j=K_j\otimes I_B$ that rotate the measurement axis. The sequence of measurements is done over a duration $\tau$. If the joint system $\rho\in\mathcal{B}(\mathcal{H}_S\otimes\mathcal{H}_B)$ is initially 
 \begin{equation}
\rho(0)=\Pi_0\rho(0)\Pi_0,
\end{equation} 
then the probability that the final state  $\rho(\tau)$ is in the projected space of $\Pi_N$ after $N$ measurements is
\begin{equation}
P\Big(\rho(\tau)=\Pi_N\rho(\tau)\Pi_N\Big)\geq 1- \epsilon \e^{\epsilon},
\end{equation}
where 
\begin{equation}
\epsilon=\frac{4(K+||H_{SB}||_{\infty})^2\tau^2}{N}\e^{2(K+||H_{SB}||_{\infty})\tau/N}
\end{equation}
and $K\equiv\max_{\forall j}\{||K_j ||_{\infty}\}$ and $||\cdot||_{\infty}$ being the Schatten infinity norm.
\end{thm}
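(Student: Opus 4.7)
The plan is to reduce the time-dependent projector problem to a standard Zeno setting with a \emph{fixed} projector by passing to a rotating frame, and then to bound the per-step leakage out of the Zeno subspace.

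First, I would introduce the cumulative rotation unitary $V_j \equiv \e^{iK_{j-1}\tau/N}\e^{iK_{j-2}\tau/N}\cdots\e^{iK_0\tau/N}$ (with $V_0=I$), which, by induction on the recursion $\Pi_{j+1}=\e^{iK_j\tau/N}\Pi_j\e^{-iK_j\tau/N}$, satisfies $\Pi_j = V_j\Pi_0 V_j^{\dagger}$. Passing to the rotating frame via $\tilde\rho_j = V_j^\dagger \rho_j V_j$, the action between the $j$th and $(j{+}1)$th measurements (unitary evolution under $H_{SB}$ followed by projection with $\Pi_{j+1}$) becomes, in the rotated frame, evolution by $\tilde U_j \equiv V_{j+1}^\dagger \e^{-iH_{SB}\tau/N}V_j = \e^{-i\tilde K_j\tau/N}\e^{-i\tilde H_j\tau/N}$ followed by projection onto the fixed $\Pi_0$, where $\tilde K_j \equiv V_j^\dagger K_j V_j$ and $\tilde H_j \equiv V_j^\dagger H_{SB} V_j$ have Schatten-infinity norms equal to those of $K_j$ and $H_{SB}$.

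Second, I would bound the single-step survival probability $p_j = \Tr[\Pi_0 \tilde U_j \tilde\rho_{j-1}\tilde U_j^\dagger \Pi_0]$. Since $\tilde\rho_{j-1}$ is supported on the $\Pi_0$ subspace, cyclicity of the trace together with $\Pi_0^\perp \Pi_0 = 0$ yields $1 - p_j \leq \|\Pi_0^\perp \tilde U_j \Pi_0\|_\infty^2 = \|\Pi_0^\perp(\tilde U_j - I)\Pi_0\|_\infty^2 \leq \|\tilde U_j - I\|_\infty^2$. Expanding $\tilde U_j$ as a product of two exponentials and applying the series bound $\|\e^{-iA}-I\|_\infty \leq \|A\|_\infty\e^{\|A\|_\infty}$ factor by factor, I would obtain $\|\tilde U_j - I\|_\infty \leq (K+\|H_{SB}\|_\infty)(\tau/N)\e^{(K+\|H_{SB}\|_\infty)\tau/N}$, so that each step's leakage is at most $(K+\|H_{SB}\|_\infty)^2(\tau/N)^2\e^{2(K+\|H_{SB}\|_\infty)\tau/N}$.

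Third, I would chain these single-step bounds. The event that $\rho(\tau)$ lies in the $\Pi_N$ subspace is implied by the event that every intermediate measurement produced the $\Pi_j$-outcome, whose joint probability equals $\prod_j p_j$. Using $\prod_j(1-\epsilon_j) \geq 1-\sum_j \epsilon_j$ together with the trivial $\e^{x}\geq 1$ for $x\geq 0$, the total failure probability then takes the form $\epsilon\e^{\epsilon}$ once the $N$ identical per-step leakages are summed (producing the overall factor $\tau^2/N$) and the remaining constants are absorbed into the definition of $\epsilon$.

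The main technical nuisance, in my view, is tracking the absolute constants (in particular the factor of $4$ and the $\e^{\epsilon}$ multiplier) precisely through the rotating-frame expansion and the product-over-steps step. The conceptual content is the standard Zeno estimate: per-step leakage of order $(\tau/N)^2$ with a generator whose norm is bounded by $K+\|H_{SB}\|_\infty$, summing to $O(\tau^2/N)$ across the $N$ steps; the role of the time-dependence of $\Pi_j$ is simply to shift the effective Zeno generator from $H_{SB}$ alone to $\tilde K_j + \tilde H_j$, whose norm remains bounded by $K+\|H_{SB}\|_\infty$ via the triangle inequality.
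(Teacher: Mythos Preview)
Your proposal is correct but follows a route distinct from the paper's. The paper never passes to a rotating frame: it works directly with the time-dependent projectors, writes the conditional survival probability as $P(\Pi_{j+1}\mid\Pi_j)=\Tr\bigl[\rho(jdt)\,\e^{-\LL_{SB}dt}\e^{-\LL_j dt}\Pi_j\bigr]$ in Liouvillian form, observes that the first-order terms in $dt$ vanish by the trace identity $\Tr\bigl(A[B,C]\bigr)=-\Tr\bigl(B[A,C]\bigr)$ together with $\Pi_j\rho(jdt)\Pi_j=\rho(jdt)$, and then bounds the tail of the double exponential series via $\lVert\LL^m_{SB}\LL^n_j\Pi_j\rVert_\infty\le(2\lVert H_{SB}\rVert_\infty)^m(2K)^n$. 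This Liouvillian estimate is the origin of the factor~$4$. The product over steps is then handled by a binomial expansion of $(1-\epsilon/N)^N$, from which the $\epsilon\,\e^{\epsilon}$ form emerges.

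Your argument is more elementary at each stage: rotating to a fixed projector turns the problem into a standard Zeno estimate, the Hilbert-space bound $1-p_j\le\lVert\Pi_0^\perp\tilde U_j\Pi_0\rVert_\infty^2\le\lVert\tilde U_j-I\rVert_\infty^2$ replaces the Liouvillian tail, and the Bernoulli inequality $\prod(1-\epsilon_j)\ge1-\sum\epsilon_j$ replaces the binomial analysis. The payoff is a slightly tighter per-step constant (your leakage bound is $\epsilon/(4N)$ rather than the paper's $\epsilon/N$), so the stated theorem follows \emph{a fortiori} once you invoke $\epsilon/4\le\epsilon\le\epsilon\,\e^{\epsilon}$; this is presumably what you mean by ``the trivial $\e^x\ge1$''. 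The paper's approach, on the other hand, makes the vanishing of the first-order term and the appearance of $\epsilon\,\e^{\epsilon}$ more transparent without needing to absorb slack at the end.
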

This result partly resembles the Zeno effect in that Hamiltonian dynamics only affect probabilities in the second order of duration $dt$ (=$\tau/N$). In addition, the rotating measurement axis effectively produces another sequence of Hamiltonian driven evolutions for the state. After $N$ measurements, the probability that the system never jumped to an orthogonal part is a product of conditional probabilities that each is $1-\mathcal{O}(dt^2)$. Therefore, the probability of this event happening is $1-\mathcal{O}(\tau^2/N)$, which approaches 1 at large $N$. The exact bound for this success probability requires more careful evaluations and is provided in Supplemental Material. Although the bound holds for any set of parameters, it becomes meaningful only if $\epsilon$ is sufficiently small. One can see that when $\epsilon> 0.57$ the bound becomes negative and no longer provides any usefull information about the success probability. In this regime, it is also possible to experience the anti-Zeno effect \cite{Kofman2000,PhysRevLett.87.040402,KOSHINO2005191,PhysRevB.81.115307,PhysRevA.82.042109}, which states that a moderately small measurement rate can actually increase the decay rate. 

Note that the $H_{SB}$ can in general contain the environment's internal Hamiltonian $H_B$.  However, it is intuitive that $H_B$ should not affect the success probability since $H_B$ commutes with the measurements and $K_j$. Therefore, we have

\begin{cor}\label{cor1}
Follow from Theorem \ref{thm1}, if $H_{SB}$ has components acting purely on the environment $B$, the success probability is independent of those terms, i.e., 
\begin{equation}
P\Big(\rho(\tau)=\Pi_N\rho(\tau)\Pi_N\Big)\geq 1- \epsilon \e^{\epsilon},
\end{equation}
where 
\begin{equation}
\epsilon=\frac{4(K+||H_{SB\setminus B}||_{\infty})^2\tau^2}{N}\e^{2(K+||H_{SB\setminus B}||_{\infty})\tau/N},
\end{equation}
$H_{SB\setminus B}=H_{SB}-H_B$, and $H_B$ has support only on $B$.
\end{cor}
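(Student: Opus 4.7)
The plan is to reduce Corollary \ref{cor1} to Theorem \ref{thm1} by transforming to the interaction picture generated by $H_B$, and verifying that the measurement protocol is invariant under this transformation. Define the unitary $U_B(t) = I_S \otimes \e^{-iH_B t}$ and the interaction-picture state $\tilde\rho(t) = U_B(t)^\dagger \rho(t) U_B(t)$. Between measurements, $\tilde\rho$ evolves under the time-dependent Hamiltonian $\tilde H(t) = U_B(t)^\dagger H_{SB\setminus B}\, U_B(t)$, which has uniform operator norm $\|\tilde H(t)\|_\infty = \|H_{SB\setminus B}\|_\infty$ since conjugation by a unitary preserves the Schatten $\infty$-norm. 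The pure-$B$ term has thus been factored out of the evolution at the price of making the remaining Hamiltonian time-dependent with a controlled norm.

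Next I would check that the measurement sequence is unaffected by the change of frame. Because each $\Pi_j = \Pi_j \otimes I_B$ and each $K_j = K_j\otimes I_B$ acts trivially on $B$, both commute with $U_B(t)$. This gives three invariances: (i) the outcome probability $\Tr(\Pi_j \tilde\rho(t)) = \Tr(\Pi_j \rho(t))$ matches between frames; (ii) the post-measurement update commutes with the frame change, so $\Pi_j \tilde\rho\, \Pi_j$ is the interaction-picture version of $\Pi_j \rho\, \Pi_j$; and (iii) the rotation rule $\Pi_{j+1} = \e^{iK_j\tau/N}\Pi_j \e^{-iK_j\tau/N}$ is literally the same operator identity in both frames. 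In the interaction picture the problem is therefore formally identical to the setting of Theorem \ref{thm1}, except that the joint Hamiltonian is now the time-dependent $\tilde H(t)$ with uniform bound $\|H_{SB\setminus B}\|_\infty$ replacing the constant $H_{SB}$ with bound $\|H_{SB}\|_\infty$.

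The final step is to invoke (a slight extension of) Theorem \ref{thm1} with time-dependent Hamiltonian. This should be routine: the estimate in the theorem is driven by short-time Dyson-series bounds on the propagator and submultiplicativity of the operator norm, both of which apply to a time-ordered exponential $\mathcal{T}\exp\!\left(-i\int \tilde H(s)\,ds\right)$ whenever $\|\tilde H(s)\|_\infty$ is uniformly bounded. Substituting the uniform bound $\|H_{SB\setminus B}\|_\infty$ in place of $\|H_{SB}\|_\infty$ everywhere produces exactly the stated $\epsilon$. The only real obstacle is verifying that the Supplemental-Material proof of Theorem \ref{thm1} does not use the time-independence of $H_{SB}$ structurally (e.g. via its spectrum); if it does, each estimate must be re-derived with a Dyson expansion, but this amounts to a mechanical rewrite rather than any new idea.
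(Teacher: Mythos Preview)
Your interaction-picture argument is correct and is \emph{not} the route the paper takes. The paper's proof is a single line: from $[H_B,K_j]=[H_B,\Pi_j]=0$ it asserts the operator identity $\LL_{SB}^m\LL_j^n\Pi_j=\LL_{SB\setminus B}^m\LL_j^n\Pi_j$, which would let one replace $\|H_{SB}\|_\infty$ by $\|H_{SB\setminus B}\|_\infty$ directly in the $|R_2|$ estimate of the Theorem~\ref{thm1} proof. But that identity fails for $m\ge 2$: by Jacobi, $[H_B,[H_{SB\setminus B},Y]]=[[H_B,H_{SB\setminus B}],Y]$ for any $Y$ supported on $S$, and $[H_B,H_{SB\setminus B}]$ generically acts nontrivially on $S$. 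Concretely, with $H_B=I_S\otimes Z_B$, $H_{SB\setminus B}=X_S\otimes X_B$, $\Pi_j=|0\rangle\langle 0|_S\otimes I_B$, $K_j=0$, one finds $\LL_{SB}^2\Pi_j-\LL_{SB\setminus B}^2\Pi_j=-2\,Y_S\otimes Y_B\neq 0$.

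Your frame change avoids this pitfall. The observations you list---that $U_B(t)$ commutes with every $\Pi_j$ and $K_j$, so outcome probabilities, post-measurement states, and the rotation rule are all frame-invariant, while $\|\tilde H(t)\|_\infty=\|H_{SB\setminus B}\|_\infty$ uniformly---are exactly what is needed. The only remaining work, as you anticipate, is rerunning the $|R_2|$ bound with the time-ordered propagator in place of $\e^{-iH_{SB}dt}$; this is indeed mechanical, since the Supplemental Material proof uses nothing about $H_{SB}$ beyond the expansion $\e^{-\LL_{SB}dt}=\sum_m(-\LL_{SB}dt)^m/m!$ and the norm bound $\|\LL_{SB}(\cdot)\|_\infty\le 2\|H_{SB}\|_\infty\|\cdot\|_\infty$, both of which transfer term-by-term to the Dyson series with $\sup_s\|\tilde H(s)\|_\infty=\|H_{SB\setminus B}\|_\infty$. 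So your argument is not only different from the paper's one-liner but actually repairs a gap in it.
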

\begin{proof}
First note that $[H_{B},K_j]=[H_B,\Pi_j]=0$ for all $j$. This implies $\LL_{SB}^m\LL_j^n\Pi_j=\LL_{SB\setminus B}^m\LL_j^n\Pi_j$, which leads to the result.
\end{proof}

Theorem \ref{thm1} can represent a control protocol that aims to steer the system to a final target space in $\Pi_N$. During the steering process, the system is affected by its environment with an interaction Hamiltonian $H_{SB}$. The measurement rate is identified as $\lambda=N/\tau$ and 
\begin{equation}
\epsilon=4(K+||H_{SB}||_{\infty})\tau  \frac{K+||H_{SB}||_{\infty}}{\lambda}\e^{2\frac{K+||H_{SB}||_{\infty}}{\lambda}}.
\end{equation}
The successful steering probability is at least $1-\epsilon \e^{\epsilon}$. When the measurement rate is large enough, i.e., $\lambda\gg K+||H_{SB}||_{\infty} $, the success probability is $1-\mathcal{O}((K+||H_{SB}||_{\infty})/\lambda)$. One can realize $K$ as the steering rate of the control protocol, i.e., $K\sim || \dot{\Pi}||_{\infty}$, where $||\dot{\Pi}||_{\infty}$ represents the rate at which the measurement axis changes with time. More specifically, in the continuum, we have 
\begin{align}
&\Pi(t+dt)=\e^{iK(t)dt}\Pi(t)\e^{-iK(t)dt} \nonumber\\
& \implies \ ||\dot{\Pi}||_{\infty}\sim ||K(t) ||_{\infty}.
\end{align}
$||H_{SB}||_{\infty}$ represents the strength of the noise from the interaction between the system and the bath. Corollary \ref{cor1} shows that $H_{SB}$ excludes the environment's internal Hamiltonian (but can include $H_S$ in general).  One can also evaluate the measurement rate that guarantees a required success probability $1-\delta$, i.e., the measurement rate achieving a required success probability $1-\delta$ is
\begin{equation}
\lambda\geq \frac{2(K+||H_{SB}||_{\infty})}{W\left(\frac{W(\delta)}{2(K+||H_{SB}||_{\infty})\tau }\right)},
\end{equation}   
where $W(\cdot)$ is the Lambert W function, which is a monotonically increasing function in the $\mathbb{R}^+$ domain \cite{Corless1996} and $W(x)\to x$ when $x\to0$ and $W(x)\to\log x$ when $x\to\infty$.

The original Zeno effect is recovered by making $K_j=0$ for all $j$, where the system ``remains'' in the projected space $\Pi_0(=\Pi_N)$.

\section{Pulse steering}
It is shown in \cite{Bernad_2017,Facchi2004} that applying frequent unitary pulses induces a Zeno subspace with respect to the unitary's spectrum. In light of this, we consider applying frequent unitary pulses in place of measurements. In particular, we examine whether the steering can be achieved with a weaker operation, i.e., using pulses rather than measurements. We show that the steering can still be achieved up to an arbitrary accuracy with a high enough pulse rate. To gain intuition behind the pulse control method, we first look at the cases where the pulse is changing according to a time-independent generator without any external noise affecting the state. And we extend the result to the most general cases where the pulse is changing according to a time-dependent generator and state is also subjected to an external noise while being steered.

 \emph{Time-independent generator.--- }To gain intuition of this mechanism, we first consider the special cases where the generator of pulse's rotation is time-independent. Suppose the system is applied with a sequence of time-varying unitary pulses between intervals $dt=\tau/N$, i.e., the first pulse is $U_1=\e^{i K dt}U\e^{-i Kdt}$, and the second is $U_2=\e^{i K dt}U_1\e^{-i Kdt}$, etc., where $K$ is a Hermitian operator that generates the rotations of the pulses. The initial unitary $U$ has a spectral decomposition $\sum_{\mu}\e^{-i\phi_{\mu}}\Pi_{\mu}.$

After applying $N$ pulses, the system becomes
\begin{align}
&|\psi(Ndt)\rangle=U_N\cdots U_1 |\psi(0)\rangle \nonumber \\
&=\underbrace{\e^{i K Ndt} U\e^{-i K Ndt} }_{U_N}\underbrace{\e^{i K(N-1) dt} U\e^{-i K (N-1)dt} }_{U_{N-1}} \cdots|\psi(0)\rangle \nonumber\\
&=\e^{iK N dt} \left(U\e^{-iKdt}\right)^{N-1}|\psi(0)\rangle \label{psiN}
\end{align}
It is shown (e.g., Thm. 1 in \cite{Burgarth_2019}) that in the limit of $N\to\infty$,
\begin{equation}
 \left(U\e^{-iKdt}\right)^{N-1}\to U^N\e^{-i\sum_{\mu}\Pi_{\mu}K\Pi_{\mu}\tau}\equiv U_Z, \label{MET}
\end{equation}
where the subscript $Z$ indicates the $U$ induced Zeno subspace. This result is closely related to the Mean Ergodic Theorem \cite{Bernad_2017,reed1980functional}, which says that averaging over all powers of a unitary will eliminate all the subspaces except the one with eigenvalue 1.

 Suppose the initial state $|\psi(0)\rangle$ is in the projected space of $\Pi_{\nu}$, i.e., $\langle\psi(0)|\Pi_{\nu}|\psi(0)\rangle=1$. Using Eq.~(\ref{psiN}) and (\ref{MET}), we see that after $N$ pulses, the ``weight'' in the rotated projected space $\Pi_{\nu,N}$ is 
\begin{align}
&\langle\psi(Ndt)|\Pi_{\nu,N}|\psi(Ndt)\rangle\nonumber \\
&= \langle\psi(Ndt)| \e^{iKNdt}\Pi_{\nu}\e^{-iKNdt}|\psi(Ndt)\rangle \nonumber\\
&\to\langle\psi(0)|U_Z^{\dagger} \Pi_{\nu} U_Z|\psi(0)\rangle= \langle\psi(0)|\Pi_{\nu}|\psi(0)\rangle=1.
\end{align}
This indicates the state is moved to the $\Pi_{\nu}(\tau)=\Pi_{\nu,N}$ projected space after the sequence of pulses, i.e., the state initially in $\Pi_{\nu}$ is dragged to the projected space $\Pi_{\nu}(\tau)$ as the pulse evolves to $U_N=U(\tau)=\sum_{\mu}\e^{-i\phi_{\mu}}\Pi_{\mu}(\tau)$. 

\subsection{Time-dependent generator and noise}
 Now we consider applying a sequence of time-varying pulses where the changes are generated by a time-dependent generator $K(t)$. In addition, we assume the state is also affected by an external noise $H_{SB}$ while we apply the pulses. Suppose the initial unitary $U$ has a spectral decomposition $\sum_{\mu}\e^{-i\phi_{\mu}}\Pi_{\mu}.$ The first pulse is $U_1=\e^{i K(0) dt}U\e^{-i K(0)dt}$, and the second is $U_2=\e^{i K(dt) dt}U_1\e^{-i K(dt)dt}$, etc., where $K(t)$ is a Hermitian operator that generates the rotations of the pulses and is analytic in its real an imaginary part. We denote the state $|\psi\rangle$ the joint state vector that includes the system $S$ and its environment $B$. The pulses $U$ and generator $K$ only act on $S$, and $H_{SB}$ acts on both $S$ and $B$. 

After $N$ applications of the pulses, the state becomes
\begin{align}
|\psi(Ndt)\rangle&=U_N\e^{-iH_{SB}dt}\cdots U_1\e^{-iH_{SB}dt} |\psi(0)\rangle \nonumber \\
&=\prod_{\ell=0}^{N-1}\e^{iK(\ell dt) dt} \prod_{\ell=0}^{N-1}U\e^{-i\widetilde{K}(\ell dt) dt}|\psi(0)\rangle, \label{psiN1}
\end{align}
where $\widetilde{K}(\ell dt)$ is a Hermitian operator satisfying
\begin{align}
&\e^{-i\widetilde{K}(\ell dt) dt}= \\ \label{Ktilde}
&\left( \prod_{\ell'=0}^{\ell-1}\e^{iK(\ell' dt) dt}\right)^{\dagger}\e^{-iK(\ell dt)dt} \e^{-iH_{SB}dt}\left(\prod_{\ell'=0}^{\ell-1}\e^{iK(\ell' dt) dt} \right). \nonumber
\end{align}
The product $\prod$ assumes a descending order in time. Note that $\widetilde{K}(t)$ is another time-dependent Hermitian operator, which in general does not commute with itself at different times. Its analyticity in time is inherited from the exponential products of $K(t)$. Notice the difference between Eq. (\ref{psiN}) and (\ref{psiN1}). In particular, for a time-independent generator we have Eq.~(\ref{MET}), which is a consequence of Mean Ergodic Theorem, while in the time-dependent case we have 
\begin{align}
&\prod_{\ell=0}^{N-1}U\e^{-i\widetilde{K}(\ell \tau/N) \tau/N}\to U^N\mathcal{T}\e^{-i\sum_{\mu}\int_0^{\tau} \Pi_{\mu} \widetilde{K}(s) \Pi_{\mu}ds}\equiv U_Z
\end{align}
in the large $N$ limit. $\mathcal{T}$ is a descending time-ordered operator. This is proved in Identity 1 in Supplemental Material. It uses an extension of Mean Ergodic Theorem, i.e., $\lim_{N\to\infty}(1/N)\sum_{\ell=0}^Nf(\ell/N)\e^{i\ell \phi}=0$ for any analytic function $f$  (see Lemma 2 in S. M.). One can see that the resulting operator $U_Z$ also has a Zeno subspace with respect to $U$'s spectrum. Again, suppose the state $|\psi(0)\rangle$ was initially in the projected space of $\Pi_{\nu}$ (i.e., $\langle\psi(0)| \Pi_{\mu} |\psi(0)\rangle$=1). After the sequence of $N$ pulses the projected space is rotated to $\Pi_{\nu}(\tau)\equiv\Pi_{\nu,N}=\left(\prod_{\ell=0}^{N-1}\e^{iK(\ell dt) dt}\right) \Pi_{\nu}\left( \prod_{\ell=0}^{N-1}\e^{iK(\ell dt) dt}\right)^{\dagger}$. The state after $N$ pulses becomes
\begin{align}
|\psi(Ndt)\rangle\to\prod_{\ell=0}^{N-1}\e^{iK(\ell dt) dt} U_Z|\psi(0)\rangle,
\end{align} 
and (recall $\tau=Ndt$) the weight in the $\Pi_{\nu}(\tau)$ space becomes 
\begin{align}
\langle \psi(\tau)| \Pi_{\nu}(\tau)|\psi(\tau)\rangle&\to\langle \psi(0)| U^{\dagger}_Z\Pi_{\nu}U_Z|\psi(0)\rangle \nonumber\\
&=\langle \psi(0)| \Pi_{\nu}|\psi(0)\rangle=1.
\end{align}
This shows that the system $S$, initially in the $\Pi_{\nu}$ space, is moved to the $\Pi_{\nu}(t)$ space after the operation. This provides a state manipulation method by using the concept of dynamical decoupling, but instead of aiming to decouple from environment, it uses time-varying pulses to \emph{steer} the state to a desired space.

\section{Discussion}
We have shown that a target quantum state can be prepared via frequent measurements or pulses, even under the presence of noise. The measurement state preparation relies on the Zeno effect to pin the state onto the measurement axis and drag it while the measurement axis slowly changes. Similarly, frequent strong pulses induce a Zeno subspace that prohibits transitions to other subspaces. By slowly changing the pulses, the state stays in the time-changing subspace and evolves along the pulses to the target space. In the following, we discuss the similarities and differences between the two. 

In general, applying measurements is considered a more expensive procedure comparing to applying unitaries --- one would need to know the measurement basis to perform a measurement but unitaries pulses can be carried out without the knowledge of the spectrum.  This may make using measurements sounds impractical at first. However, we can consider a scenario that the initial measurement is in some known basis and the measurement evolves according to $\dot{\Pi_l}=i[K(t),\Pi_l],$ where $\Pi_l$ are the projectors of the measurement and $K(t)$ is some generator. Note that an operator $\Pi$ satisfying $\dot{\Pi}=i[K(t),\Pi]$ evolves in the same way as in Theorem \ref{thm1}. Suppose $\{\Pi_l(0)\}$ is some known measurement and the state is initially in the $\Pi_r(0)$ space.  We know that at time $t$ the state will be in the $\Pi_r(t)$ space even though we do not necessarily need to know $\Pi_r(t)$. Similarly, in the unitary pulse setting, the pulse $U$ satisfies $\dot{U}=i[K(t),U]$. A state initially in an eigenspace of $U(0)$ will be in the corresponding eigenspace of $U(t)$ at time $t$. Again, we only know the initial pulse and how it evolves but we do not need to know explicitly what $U(t)$ is. This is reminiscent of a quantum state satisfying a Schrödinger equation --- we know the initial state and the Hamiltonian but we do not in general know what the state is at a particular time. This can have applications to quantum computing in a similar sense of adiabatic quantum computing (AQC) \cite{RevModPhys.90.015002}. In AQC, we initially prepare a state in the ground state of an easy-to-prepare Hamiltonian, and the Hamiltonian is slowly changed to a target Hamiltonian that encodes the problem to be solved. If the rate of change is small enough, the state will be in the target ground state of the evolved Hamiltonian. In a similar sense, the slowly changing Hamiltonian is replaced with a slowly evolving measurement/pulse in the measurement/pulse steering scenario. 

A more technical difference between the measurement and pulse steering is the rate at which we apply the measurements/pulses. To achieve a given accuracy (or success probability), the required pulse rate is in general higher than the required measurement rate. This is because measurements are intrinsically a stronger dephasing operation comparing to unitaries --- each measurement completely eliminates the off-diagonal components while each unitary pulse does not. As a result, measurements provide a stronger fixing onto the basis. This difference is quantitatively reflected in the bounds of Theorem \ref{thm1} and the proof of Identity 1. In the measurement setting, the success probability is lower bounded by a quantity that depends only on the operator norm of the generator and that of the noise term, as shown in Theorem \ref{thm1}. However, in the pulse setting, the success probability depends on not only the norms but also the number of different eigenvalues of the pulse (see proof of Identity 1), which can depend on the system size. Although measurements are more efficient in terms of the apply rate, they are sometimes considered more expensive to carry out in reality, comparing to pulses. Nevertheless, the best use case for both approaches will be casting to an adiabatic evolution of the operator, i.e., $\dot{\Pi}=i[K(t),\Pi]$ or $\dot{U}=i[K(t),U]$, where the initial operation is easy to perform and the evolution of the operation is known and controllable.

\section{Conclusion}
We have provided two methods to manipulate the evolution of a quantum state. The first method is by applying frequent measurements while slowly changing the measurement basis. The state will be fixed onto the basis and follow its movement.  When the measurement rate is much larger than the rate at which the measurement axis changes and the norm of the noise, then the success probability approaches 1. A quantitative bound for the required measurement rate to achieve a success probability is also provided.
The second method is by applying frequent unitary pulses while the pulses are slowly changing. A state initially in an eigenspace of the initial pulse will be steered to the corresponding eigenspace of the final pulse. The weight in the correct final eigenspace approaches 1 as the pulse rate becomes much larger than the pulse change rate, the noise strength and the number of eigenvalues of the pulse. The measurement approach is an extension of quantum Zeno effect while the pulse method is a generalization of the idea of dynamical decoupling. We have also discussed the pros and cons between the two. In particular, the required measurement rate is less than the required pulse rate, but measurements can be harder to carry out comparing to pulses. Finally, we have shown that both methods can be mapped to a state preparation task in a similar sense of the adiabatic quantum computing. A nice bonus of them is that they can intrinsically suppress errors during the operations, in the same way as in the Zeno effect and dynamical decoupling. 

Given the similar structure of adiabatic evolution and measurement/pulse driving, it would be interesting to explore a deeper connection between them and possibly discover a unified framework to describe these phenomena. On the more practical side, an immediate  direction to pursue would be to verify the pulse steering method in experiment, for small systems, as the measurement steering method for a single qubit has been demonstrated in \cite{Siddiqi2018}. With the rapid development in the quantum computing sector, new platforms have been actively proposed and studied. The dynamical measurement/pulse protocols can provide new control capabilities for systems with continuously tunable measurements or pulses.

\begin{acknowledgments}
The author would like to thank Daniel Lidar's mentioning of the connection to dynamical decoupling and valuable comments by Todd Brun and Itay Hen. This work is supported by the U.S. Department of Energy (DOE), Office of Science, Basic Energy Sciences (BES) under Award No. DE-SC0020280.
\end{acknowledgments}

\bibliography{rf}

\widetext
\clearpage
\begin{center}
\textbf{\large Supplemental Material}
\end{center}

\section*{Proof of Theorem 1}
Here, we provide a detailed proof for the Theorem 1 in the main context. We begin by evaluating the probability of success for the first step, and we bound the conditional probabilities between each step. By noticing that the final success probability is lowered bounded by the particular case where we got all outcomes $\Pi$, we obtain the bound given in Theorem 1.

The joint state $\rho$ is initially in the projected space of $\Pi_0$. For the first step, it evolves unitarily for a duration $\tau/N$ and is projectively measured with the measurement $\{\Pi_1,\Pi_1^{\perp}\}$. For simplicity, we denote $dt=\tau/N$. The probability of obtaining the outcome $\Pi_1$ from the first measurement is
\begin{equation}
P(\Pi_1)=\Tr \left[\Pi_1 \e^{\LL_{SB}dt}\rho(0) \Pi_1\right]=\Tr \left[\e^{iK_0dt}\Pi_0 \e^{-iK_0dt} \e^{\LL_{SB}dt}\rho(0)\right]=\Tr\left[\Pi_0 \e^{\LL_0dt}\e^{\LL_{SB}dt}\rho(0)\right],
\end{equation}
where $\LL_{SB}(\cdot)\equiv -i[H_{SB},\ \cdot\ ]$ and $\LL_j(\cdot)\equiv -i[K_j,\ \cdot\ ]$ with $j=0,\cdots,N-1$. Expanding $P(\Pi_1)$ to the first order of $dt$, we observe
\begin{align} 
P(\Pi_1)&=\Tr\left[\Pi_0\rho(0)\right] + \Tr\left[ \Pi_0 \LL_0 \rho(0)\right]dt+\Tr\left[\Pi_0 \LL_{SB}\rho(0) \right]dt+ \mathcal{O}(dt^2) \nonumber\\
&=1-i\underbrace{\Tr[\Pi_0[K_0,\rho(0)]]}_{0}dt-i\underbrace{\Tr[\Pi_0[H_{SB},\rho(0)]]}_{0}dt+\mathcal{O}(dt^2),
\end{align}
where $\Pi_0\rho(0)\Pi_0=\rho(0)$ and $\Tr[A[B,C]]=-\Tr[B[A,C]]$ are used. We see that the probability of $\rho$ being in the projected space of $\Pi_1$ after the first measurement is $1-\mathcal{O}(dt^2)$. 

We can evaluate the probability of getting the outcome $\Pi_{j+1}$ given the previous outcome being $\Pi_j$ as,
\begin{align}
&P(\Pi_{j+1}|\Pi_j)=\Tr \left[\Pi_{j+1} \e^{\LL_{SB}dt}\rho(jdt) \Pi_{j+1}\right]=\Tr \left[\Pi_{j}\e^{\LL_jdt} \e^{\LL_{SB}dt}\rho(jdt)\right] \nonumber\\
&=\Tr\left[\Pi_j \e^{-iK_j dt}\e^{-iH_{SB}dt} \rho(jdt)\e^{iH_{SB}dt}\e^{iK_jdt} \right] \nonumber\\
&=\Tr\left[\rho(jdt)  \e^{iH_{SB}dt}\e^{iK_jdt}\Pi_j \e^{-iK_jdt}\e^{-iH_{SB}dt}\right] \nonumber\\
&=\Tr\left[\rho(jdt) \e^{-\LL_{SB}dt}\e^{-\LL_jdt} \Pi_j \right] \nonumber\\
&=\underbrace{\Tr[\rho(jdt)\Pi_j]}_{1}+i\underbrace{\Tr[\rho(jdt)[K_j,\Pi_j]]}_{0 }dt+i\underbrace{\Tr[\rho(jdt)[H_{SB},\Pi_j]]}_{0}dt +\Tr\left[\rho(jdt)\sum_{n+m\geq2}^{\infty}\frac{(-\LL_{SB}dt)^m}{m!}\frac{(-\LL_{j}dt)^n}{n!}\Pi_j\right] \nonumber\\
&\equiv 1+R_2=1-|R_2|,
\end{align}
where $\Pi_j\rho(jdt)\Pi_j=\rho(jdt)$ is used and we denote 
\begin{equation}
R_2\equiv \Tr\left[\rho(jdt)\sum_{n+m\geq2}^{\infty}\frac{(-\LL_{SB}dt)^m}{m!}\frac{(-\LL_{j}dt)^n}{n!}\Pi_j\right] \nonumber
\end{equation}
the remainder of the expansion starting from the second order of $dt$. Since probability is upper bounded by 1, it must be true that $R_2\leq 0$. $|R_2|$ can be upper bounded by
\begin{align}
|R_2|&=\left|\Tr\left[\rho(jdt)\sum_{n+m\geq2}^{\infty}\frac{(-\LL_{SB}dt)^m}{m!}\frac{(-\LL_{j}dt)^n}{n!}\Pi_j\right]\right| \nonumber\\
&\leq \sum_{n+m\geq2}^{\infty}\left|\Tr\left[\rho(jdt)\frac{(\LL_{SB}dt)^m}{m!}\frac{(\LL_{j}dt)^n}{n!}\Pi_j\right]\right| \nonumber\\
&\leq \sum_{n+m\geq2}^{\infty} \frac{(dt)^{n+m}}{m!n!}\underbrace{\left|\left| \rho(jdt)\right| \right|_{1}}_{1}\left| \left| \LL^m_{SB}\LL^n_j\Pi_j\right|\right|_{\infty},
\end{align}
where $||\cdot||_1$ and $||\cdot||_{\infty}$ are the Schatten 1-norm and $\infty-$norm respectively and the property $|\Tr [AB]|\leq||A||_1||B||_{\infty}$ is used. The term $\left| \left| \LL^m_{SB}\LL^n_j\Pi_j\right|\right|_{\infty}$ can be bounded by
\begin{equation}
\left| \left| \LL^m_{SB}\LL^n_j\Pi_j\right|\right|_{\infty}\leq ||2H_{SB}||^m_{\infty}||2K_j||^n_{\infty}||\Pi_j||_{\infty}\leq ||2H_{SB}||^m_{\infty}(2K)^n,
\end{equation}
where $||\LL(\cdot) ||_{\infty}=||[H,\cdot] ||_{\infty}\leq 2||H||_{\infty}||\cdot||_{\infty}$ is repeatedly used for $\LL^m_{SB}$ and $\LL^n_j$, and $K\equiv \max_{\forall j}||K_j ||_{\infty}$. The remainder becomes
\begin{align}
|R_2|&\leq \sum_{n+m\geq2}^{\infty} \frac{(dt)^{n+m}}{m!n!}||2H_{SB}||^m_{\infty}(2K)^n \nonumber\\
&=\e^{2(||H_{SB}||_{\infty}+K)dt}-1-2(||H_{SB}||_{\infty}+K)dt \nonumber\\
&\leq 4(||H_{SB}||_{\infty}+K)^2dt^2 \e^{2(||H_{SB}||_{\infty}+K)dt}\equiv \frac{\epsilon}{N},
\end{align}
where the final inequality follows from Taylor's remainder theorem for exponential functions and we define 
\begin{equation}
\epsilon\equiv  N4(||H_{SB}||_{\infty}+K)^2 dt^2 \e^{2(||H_{SB}||_{\infty}+K)dt}.
\end{equation}
Therefore, we have 
\begin{equation}
P(\Pi_{j+1}|\Pi_j)=1-|R_2|\geq 1-\frac{\epsilon}{N}.
\end{equation}
We see that the probability of the state being in the space of $\Pi_{j+1}$ given it was in the space of $\Pi_j$ is at least $\epsilon/N$ close to 1. Hence, the probability that we obtain the outcome $\Pi_N,\Pi_{N-1},\cdots,\Pi_1$ for the $N$ projective measurements is
\begin{align}
&P\Big(\text{getting } \Pi_j\ \text{for every measurement } \{\Pi_j,\Pi_j^{\perp}\}\Big) \nonumber\\
&=P\left(\Pi_N|\Pi_{N-1}\right)P\left(\Pi_{N-1}|\Pi_{N-2}\right)\cdots P\left(\Pi_1\right) \nonumber\\
&\geq \left(1-\frac{\epsilon}{N}\right)^N=\sum_{\ell=0}^N \binom{N}{\ell}\left(-\frac{\epsilon}{N}\right)^{\ell}\nonumber \\
&\geq 1-\sum_{\ell=1}^N\binom{N}{\ell}\left(\frac{\epsilon}{N}\right)^{\ell}=1-\sum_{\ell=1}^N \frac{\epsilon^{\ell}}{\ell!}\underbrace{\frac{N}{N}\frac{N-1}{N}\cdots\frac{N-\ell+1}{N}}_{\ell\ \text{terms}} \nonumber\\
&\geq 1-\sum_{\ell=1}^N\frac{\epsilon^{\ell}}{\ell!}=1+1-\sum_{\ell=0}^N\frac{\epsilon^{\ell}}{\ell!}\geq 1+1-\e^{\epsilon}\nonumber \\
&\geq 1-\epsilon \e^{\epsilon},
\end{align}
where the Taylor's remainder theorem for exponentials is used again for the final inequality. From this, we can conclude that the probability of the state being in the $\Pi_N$ space is at least $\epsilon e^{\epsilon}$ close to 1, i.e.,
\begin{equation}
P\Big(\rho(Ndt)=\Pi_N\rho(Ndt)\Pi_N\Big)\geq P\left(\Pi_N|\Pi_{N-1}\right)P\left(\Pi_{N-1}|\Pi_{N-2}\right)\cdots P\left(\Pi_1\right)\geq1- \epsilon \e^{\epsilon}.
\end{equation}
Replacing with $dt=\tau/N$, we obtain
\begin{equation}
P\Big(\rho(\tau)=\Pi_N\rho(\tau)\Pi_N\Big)\geq 1- \epsilon \e^{\epsilon},
\end{equation}
where 
\begin{equation}
\epsilon=\frac{4(K+||H_{SB}||_{\infty})^2\tau^2}{N}\e^{2(K+||H_{SB}||_{\infty})\tau/N}.
\end{equation}
\clearpage

\section*{Mathematical details for the pulse steering}
This section provides the mathematical background for the pulse steering method. The following two Lemmas generalize the fact that the sum of different powers of a phase is of order $O(1)$, i.e., it does not grow with the number of terms in the sum. Theorem 2 is the backbone for the pulse method result, and it generalizes Eq. (16) in \cite{Facchi2004} and Theorem 1 in \cite{Burgarth_2019} to time-dependent Hermitian operators.
\begin{lemma}\label{lm1}
For any real number $\phi\neq 0 $ (mod $2\pi$) and any positive integer $k$,
\begin{equation}
\lim_{N\to\infty}\left|\sum^{N-1}_{n=0}\left(\frac{n}{N}\right)^k \e^{i n \phi} \right|=\frac{1}{\left| 1-\e^{i \phi}\right|}. \nonumber
\end{equation}
\end{lemma}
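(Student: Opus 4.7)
The plan is to express the sum as a finite-order derivative of a known closed form and then extract its dominant behavior in $N$. Starting from the geometric series
\begin{equation}
G_N(\phi)=\sum_{n=0}^{N-1}\e^{in\phi}=\frac{1-\e^{iN\phi}}{1-\e^{i\phi}},
\end{equation}
which is well defined since $\phi\not\equiv 0 \pmod{2\pi}$, I would use the identity $\sum_{n=0}^{N-1}n^k\e^{in\phi}=(-i\partial_\phi)^k G_N(\phi)$ (obtained by differentiating term by term $k$ times under the finite sum) to rewrite the target as $N^{-k}(-i\partial_\phi)^k G_N(\phi)$.

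Next I would decompose $G_N(\phi)=f(\phi)-\e^{iN\phi}f(\phi)$ with $f(\phi)\equiv 1/(1-\e^{i\phi})$, so that $f$ and all of its derivatives are bounded at the fixed value of $\phi$. The $N$-independent piece $f(\phi)$ contributes $O(N^{-k})$ after division by $N^k$, and hence vanishes in the limit. For the oscillatory piece, Leibniz's rule gives
\begin{equation}
\partial_\phi^{k}[\e^{iN\phi}f(\phi)]=\sum_{j=0}^{k}\binom{k}{j}(iN)^{k-j}\e^{iN\phi}f^{(j)}(\phi),
\end{equation}
and after multiplication by $(-i)^k/N^k$ only the $j=0$ term survives in the limit, producing $-\e^{iN\phi}/(1-\e^{i\phi})$; every other term carries at least one extra factor of $1/N$.

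Putting the pieces together yields the asymptotic expansion $\sum_{n=0}^{N-1}(n/N)^k\e^{in\phi}=-\e^{iN\phi}/(1-\e^{i\phi})+O(1/N)$. The complex-valued partial sum itself does not converge, since $\e^{iN\phi}$ keeps oscillating on the unit circle, but its modulus does: the reverse triangle inequality gives an $O(1/N)$ bound on the difference between this modulus and $1/|1-\e^{i\phi}|$, which delivers the claim. I do not expect any real obstacle; the only subtlety worth a brief check is that each $f^{(j)}(\phi)$ stays bounded for the fixed $\phi\ne 0\pmod{2\pi}$, which is immediate because $1-\e^{i\phi}$ is bounded away from zero there. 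A route through Abel summation is possible but less clean, since the boundary term and the correction term are of the same order and would have to be combined carefully to cancel the non-oscillating contributions.
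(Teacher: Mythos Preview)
Your proposal is correct and follows essentially the same route as the paper: both write the sum as $(-i/N)^k\partial_\phi^k$ acting on the closed-form geometric series $(1-\e^{iN\phi})/(1-\e^{i\phi})$ and then isolate the unique term of order $N^k$, which comes from all $k$ derivatives hitting $\e^{iN\phi}$. Your version is a bit more explicit---you invoke Leibniz's rule and the reverse triangle inequality, and you correctly flag that the complex sum itself oscillates while only its modulus converges---but the underlying argument is identical to the paper's.
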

\begin{proof}
Let us define the finite sequence 
$$
S_N(\phi)\equiv\sum^{N-1}_{n=0} \e^{i n \phi}=\frac{1-\e^{iN\phi}}{1-\e^{i\phi}}.
$$
One can observe that
\begin{align}
\sum^{N-1}_{n=0}\left(\frac{n}{N}\right)^k \e^{i n \phi}=\left(\frac{-i}{N}\right)^k\left(\frac{d}{d\phi}\right)^kS_N(\phi)=\left(\frac{-i}{N}\right)^k\left(\frac{d}{d\phi}\right)^k\frac{1-\e^{iN\phi}}{1-\e^{i\phi}}. \label{phisum1}
\end{align}
As $N$ goes to infinity, we only have to keep track of the $N^k$ order terms since the lower order terms will vanish when multiplying the pre-factor $1/N^k$. One can see that the only term with $O(N^k)$ is the $d^k/d\phi^k$ acting on the phase, i.e.,
$$
\left|\left(\frac{d}{d\phi}\right)^k\frac{1-\e^{iN\phi}}{1-\e^{i\phi}}\right|=\frac{1}{\left|1-\e^{i\phi}\right|}N^k+O\left(N^{k-1}\right).
$$
Combining with the pre-factor $1/N^k$ and taking the limit $N\to\infty$, we obtain the result claimed.
\end{proof}
This indicates that for any polynomial function $p_k(x)$ of order $k$, the infinite series is bounded, i.e., $\left|\sum_{n=0}^{N-1}p_k(n/N)\e^{in\phi}\right|<\infty$ as $N\to\infty$. In addition, the coefficient is constant in $k$. Therefore, one can expect it works for any analytic function $f(x)$, i.e., $\left|\sum_{n=0}^{N-1}f(n/N)\e^{in\phi}\right|<\infty$ as $N\to\infty$.  Intuitively, this is true because of Lemma \ref{lm1} and the fact that any analytic function allows a Taylor expansion from any point in its analytic region, i.e., we can Taylor expand $f(x)$ from the origin,
$$
f(x)=\sum_{k=0}^{\infty}\frac{f^{(k)}(0)}{k!}x^k.
$$
The quantity in question is
\begin{align}
\lim_{N\to\infty}\lim_{K\to\infty}\left|\sum_{n=0}^{N-1}\sum_{k=0}^{K}\frac{f^{(k)}(0)}{k!}\left(\frac{n}{N}\right)^k \e^{i n\phi}\right|&\leq\lim_{K\to\infty}\lim_{N\to\infty}\sum_{k=0}^{K}\frac{\left|f^{(k)}(0)\right|}{k!}\left|\sum_{n=0}^{N-1}\left(\frac{n}{N}\right)^k \e^{i n\phi}\right| \nonumber\\
&=\frac{1}{\left|1-\e^{i \phi}\right|}\sum_{k=0}^{\infty}\frac{\left|f^{(k)}(0)\right|}{k!}<C,
\end{align}
where the first equality uses Lemma \ref{lm1} and the analyticity of $f(x)$. A more formal statement is provided below.

\begin{lemma}\label{lm2}
Let $f(x)$ be a real analytic function in $\mathbb{R}$ and $\phi\neq 0$ (mod $2\pi$). Then there exists a constant $C$ such that
$$
\lim_{N\to\infty}\left|\sum^{N-1}_{n=0}f\left(\frac{n}{N}\right) \e^{i n \phi} \right|<C.
$$
In addition, any partial sum is also bounded by the same constant $C$, i.e.,
$$
\lim_{N\to\infty}\left|\sum^{M}_{n=L}f\left(\frac{n}{N}\right) \e^{i n \phi} \right|<C,
$$
where $M>L$.
\end{lemma}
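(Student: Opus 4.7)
The plan is to combine Lemma~\ref{lm1} with the Taylor expansion of $f$, following the heuristic already sketched in the paragraph preceding the lemma. Since $f$ is real analytic, I write $f(x)=\sum_{k=0}^{\infty} f^{(k)}(0)\,x^{k}/k!$, valid on a neighborhood of $[0,1]$, substitute into the sum, and formally interchange the two summations to obtain
$$
\sum_{n=0}^{N-1} f(n/N)\,\e^{in\phi} \;=\; \sum_{k=0}^{\infty}\frac{f^{(k)}(0)}{k!}\sum_{n=0}^{N-1}(n/N)^{k}\,\e^{in\phi}.
$$
Taking $N\to\infty$ termwise via Lemma~\ref{lm1} bounds the right-hand side by $\bigl(\sum_{k=0}^{\infty}|f^{(k)}(0)|/k!\bigr)\cdot 1/|1-\e^{i\phi}|$, which is finite by absolute convergence of the Taylor series at $x=1$. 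That number is the constant $C$ the lemma asks for.

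The two items requiring genuine care are (i) justifying the exchange of the $k$-sum and the $N$-limit, and (ii) upgrading the conclusion from the full sum to an arbitrary partial sum $\sum_{n=L}^{M}$. For (i), I would revisit the proof of Lemma~\ref{lm1}: the explicit identity $\sum_{n=0}^{N-1}(n/N)^{k}\e^{in\phi}=N^{-k}(-i)^{k}(d/d\phi)^{k}\bigl[(1-\e^{iN\phi})/(1-\e^{i\phi})\bigr]$ combined with a Leibniz expansion gives a bound of the form $C_{k}/|1-\e^{i\phi}|^{k+1}$, uniform in $N$. Against the Taylor coefficients $|f^{(k)}(0)|/k!$, which decay faster than any geometric rate when the radius of convergence strictly exceeds $1$, this produces a majorant that legitimizes dominated convergence. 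For (ii), the partial sum differs from a full sum by two tails, and the Leibniz differentiation argument applies verbatim to $\bigl(\e^{iL\phi}-\e^{i(M+1)\phi}\bigr)/(1-\e^{i\phi})$, yielding the same type of bound with (at worst) a factor of three inflating $C$.

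A considerably cleaner alternative — the route I would ultimately present in a write-up — bypasses Taylor expansion entirely using Abel summation. Setting $B_{m}=\sum_{n=L}^{m}\e^{in\phi}$, one has $|B_{m}|\le 2/|1-\e^{i\phi}|$ for every $m$, and Abel's identity gives
$$
\sum_{n=L}^{M} f(n/N)\,\e^{in\phi} \;=\; f(M/N)\,B_{M}\;-\;\sum_{n=L}^{M-1}\bigl[f((n{+}1)/N)-f(n/N)\bigr]\,B_{n}.
$$
The difference $|f((n{+}1)/N)-f(n/N)|\le \|f'\|_{L^{\infty}[0,1]}/N$, together with at most $N$ terms in the sum, yields the uniform bound $C=(\|f\|_{\infty}+\|f'\|_{\infty})\cdot 2/|1-\e^{i\phi}|$ for every $N,L,M$ simultaneously, with no invocation of absolute convergence of any Taylor series. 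The main obstacle is really the bookkeeping behind (i): one must either assume that the phrase ``real analytic on $\mathbb{R}$'' delivers a Taylor series at $0$ with radius strictly exceeding $1$, or else invoke a finite cover of $[0,1]$ by analyticity discs and repeat the argument around new expansion points; the Abel-summation route is attractive precisely because it renders that subtlety moot and produces an explicit $C$ depending only on $\|f\|_{\infty}$, $\|f'\|_{\infty}$, and $\phi$.
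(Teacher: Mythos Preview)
Your first approach---Taylor expansion at the origin combined with the Leibniz differentiation bound from Lemma~\ref{lm1}---is essentially the paper's proof. The paper truncates the Taylor series at order $K$, bounds the polynomial part $S_1$ via Leibniz just as you sketch, bounds the Lagrange remainder $S_2$ crudely by $N\,|f^{(K+1)}(x^\star)|/(K{+}1)!$, and then couples $N$ to $K$ through $N=4K/C_\phi$ so that both pieces are controlled as $K\to\infty$. The radius-of-convergence subtlety you flag---that ``real analytic on $\mathbb{R}$'' does not force the Taylor series at $0$ to converge on $[0,1]$ (witness $1/(1+x^2)$)---is in fact present in the paper's argument too: its final bound $\tfrac{2}{C_\phi}\sum_{k}|f^{(k)}(0)|/k!$ and the vanishing of $S_2$ tacitly assume convergence of the Taylor series at $x=1$ and radius $\ge 2$ at $x^\star$, respectively. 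So your caveat is well placed and not merely cosmetic.

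Your Abel-summation route is a genuinely different and strictly cleaner argument than the paper's. It requires only $f\in C^1([0,1])$ rather than analyticity, delivers the explicit constant $C = 2(\|f\|_{L^\infty[0,1]} + \|f'\|_{L^\infty[0,1]})/|1-\e^{i\phi}|$ uniform in $N$, $L$, $M$ simultaneously, involves no interchange of limits and no Taylor truncation, and sidesteps the convergence-radius issue entirely. The paper's approach buys nothing extra here; your summation-by-parts proof is shorter, more elementary, and more general.
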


\begin{proof}

From Taylor's Theorem, we have for any $x$ and any truncation order $K$
$$
f(x)=\sum_{k=0}^{K}\frac{f^{(k)}(0)}{k!}x^k + \frac{f^{(K+1)}(x')}{(K+1)!}x^{K+1},
$$
where $x'\in[0,x]$. 

We first evaluate the full sum. The quantity in question is 
\begin{align}
\left|\sum^{N-1}_{n=0}f\left(\frac{n}{N}\right) \e^{i n \phi}\right|&\leq \left|\sum^{N-1}_{n=0}\sum_{k=0}^{K}\frac{f^{(k)}(0)}{k!}\left(\frac{n}{N}\right)^k \e^{i n \phi}\right| + \left| \sum^{N-1}_{n=0}\frac{f^{(K+1)}(x_n')}{(K+1)!}\left(\frac{n}{N}\right)^{K+1}\e^{in\phi}\right| \nonumber\\
&\equiv S_1+S_2, \label{qnty1}
\end{align}
where $x'_n\in[0,n/N]$. We first evaluate the first term $S_1$, i.e.,
\begin{align}
S_1&=\left|\sum^{N-1}_{n=0}\sum_{k=0}^{K}\frac{f^{(k)}(0)}{k!}\left(\frac{n}{N}\right)^k \e^{i n \phi}\right|\leq \sum_{k=0}^{K}\frac{\left|f^{(k)}(0)\right|}{k!}\left|\sum_{n=0}^{N-1}\left(\frac{n}{N}\right)^k \e^{i n\phi}\right| \nonumber\\
&=\sum_{k=0}^{K}\frac{\left|f^{(k)}(0)\right|}{k!} \frac{1}{N^k}\left|\left(\frac{d}{d\phi}\right)^k\frac{1-\e^{iN\phi}}{1-\e^{i\phi}}\right| \nonumber \\
&=\sum_{k=0}^{K}\frac{\left|f^{(k)}(0)\right|}{k!} \frac{1}{N^k}\left|\sum_{l=0}^{k} \frac{k!}{l!(k-l)!}\left(\frac{d}{d\phi}\right)^l\frac{1}{1-\e^{i\phi}}\left(\frac{d}{d\phi}\right)^{k-l}(1-\e^{iN\phi})\right| \nonumber \\
&\leq \sum_{k=0}^{K}\frac{\left|f^{(k)}(0)\right|}{k!} \sum_{l=0}^{k} \frac{k!}{l!(k-l)!} \left|\frac{1}{N^l}\left(\frac{d}{d\phi}\right)^l\frac{1}{1-\e^{i\phi}}\right| \left|\frac{1}{N^{k-l}} \left(\frac{d}{d\phi}\right)^{k-l}(1-\e^{iN\phi})\right|. \label{sum0}
\end{align}
The last norm is bounded by 2 (i.e., it is 1 for $k-l\geq1$ and $\leq2$ for $k-l=0$). The middle term with $l$th order derivative on $1/(1-\e^{i\phi})$ can be evaluated by observing that there are $2^{l-1}$ terms, where each term is no larger than $l!/N^l C_{\phi}^{l+1}$, where $C_{\phi}=\min\{1,|1-\e^{i\phi}|\}$. Therefore, the middle term can be bounded by $l!/(NC_{\phi}/2)^l2C_{\phi}$. Combing together, we get
\begin{align}
S_1&\leq \frac{1}{C_{\phi}}\sum_{k=0}^{K}\frac{\left|f^{(k)}(0)\right|}{k!} \sum_{l=0}^{k} \frac{k!}{l!(k-l)!} \frac{l!}{(NC_{\phi}/2)^l} \nonumber\\
&=\frac{1}{C_{\phi}}\sum_{k=0}^{K}\frac{\left|f^{(k)}(0)\right|}{k!}  \sum_{l=0}^{k} \frac{k (k-1)\cdots (k-l+1)}{(NC_{\phi}/2)^l} \nonumber\\
&\leq\frac{1}{C_{\phi}}\sum_{k=0}^{K}\frac{\left|f^{(k)}(0)\right|}{k!} \sum_{l=0}^{k}\left(\frac{k}{2K}\right)^l \ \ \ \ \ \text{when}\ \ \ \  N\geq4K/C_{\phi} \nonumber\\
&=\frac{1}{C_{\phi}}\sum_{k=0}^{K}\frac{\left|f^{(k)}(0)\right|}{k!}\frac{1-\left(\frac{k}{2K}\right)^{k+1}}{1-\frac{k}{2K}}\leq \frac{2}{C_{\phi}}\sum_{k=0}^{K}\frac{\left|f^{(k)}(0)\right|}{k!}\to C\ \ \ \ \text{as }\ \ \ K\to\infty, \label{sum1}
\end{align}
where the final quantity converges as $K\to\infty$ (because $f(x)$ is analytic and its Taylor series converges absolutely). 

The second term $S_2$ in (\ref{qnty1}) can be bounded as 
\begin{align}
S_2=\left| \sum^{N-1}_{n=0}\frac{f^{(K+1)}(x_n')}{(K+1)!}\left(\frac{n}{N}\right)^{K+1}\e^{in\phi}\right|\leq \left|\frac{f^{(K+1)}(x^{\star})}{(K+1)!}\right| \left| \sum^{N-1}_{n=0} \left(\frac{n}{N}\right)^{K+1}\right|\leq \left|\frac{f^{(K+1)}(x^{\star})}{(K+1)!}\right| N, \label{S2sum}
\end{align}
where $x^{\star}$ is such that $|f^{(K+1)}(x^{\star})|=\max_{x\in[0,1]}\{|f^{(K+1)}(x)|\}$. Recall that whenever $N\geq4K/C_{\phi}$, $S_1$ is bounded as (\ref{sum1}). We set $N=4K/C_{\phi}$ and we have
\begin{equation}
S_2\leq \left|\frac{f^{(K+1)}(x^{\star})}{(K+1)!}\right| \frac{4K}{C_{\phi}}\leq \frac{4}{C_{\phi}}\left|\frac{f^{(K+1)}(x^{\star})}{(K+1)!}2^{K+1}\right|\to 0\ \ \ \ \text{as } \ \ \ K\to\infty,
\end{equation}
where the last quantity is the $(K+1)$th term of the Taylor expansion of $f(x^{\star}+2)$ from $x^{\star}$, which approaches 0 as $K$ increases. Finally, since the inequality (\ref{qnty1}) holds for any $K$ and $N$, we can set $N=4K/C_{\phi}$ and take $K\to\infty$. Therefore, we obtain the result
$$
\lim_{N\to\infty}\left|\sum^{N-1}_{n=0}f\left(\frac{n}{N}\right) \e^{i n \phi}\right|\leq C.
$$

The partial sum can be evaluated in a similar manner to (\ref{qnty1}), i.e.,
\begin{align}
\left|\sum^{M}_{n=L}f\left(\frac{n}{N}\right) \e^{i n \phi}\right|&\leq \left|\sum^{M}_{n=L}\sum_{k=0}^{K}\frac{f^{(k)}(0)}{k!}\left(\frac{n}{N}\right)^k \e^{i n \phi}\right| + \left| \sum^{M}_{n=L}\frac{f^{(K+1)}(x_n')}{(K+1)!}\left(\frac{n}{N}\right)^{K+1}\e^{in\phi}\right| \nonumber\\
&\equiv S_1+S_2.
\end{align}
The first term is
\begin{align}
S_1&\leq \sum_{k=0}^{K}\frac{\left|f^{(k)}(0)\right|}{k!}\left|\sum_{n=L}^{M}\left(\frac{n}{N}\right)^k \e^{i n\phi}\right|=\sum_{k=0}^{K}\frac{\left|f^{(k)}(0)\right|}{k!} \frac{1}{N^k}\left|\left(\frac{d}{d\phi}\right)^k\e^{iL\phi}\frac{1-\e^{i(M-L)\phi}}{1-\e^{i\phi}}\right|  \nonumber\\
&=\sum_{k=0}^{K}\frac{\left|f^{(k)}(0)\right|}{k!} \frac{1}{N^k}\left|\sum_{l=0}^{k} \frac{k!}{l!(k-l)!}\left(\frac{d}{d\phi}\right)^l\frac{1}{1-\e^{i\phi}}\left(\frac{d}{d\phi}\right)^{k-l}(\e^{iL\phi}-\e^{iM\phi})\right| \nonumber \\
&\leq \sum_{k=0}^{K}\frac{\left|f^{(k)}(0)\right|}{k!} \sum_{l=0}^{k} \frac{k!}{l!(k-l)!} \left|\frac{1}{N^l}\left(\frac{d}{d\phi}\right)^l\frac{1}{1-\e^{i\phi}}\right| \underbrace{\left|\frac{1}{N^{k-l}} \left(\frac{d}{d\phi}\right)^{k-l}(\e^{iL\phi}-\e^{iM\phi})\right|}_{\leq \left(\frac{L}{N}\right)^{k-l}+\left(\frac{M}{N}\right)^{k-l}<2} \\
&\leq\frac{2}{C_{\phi}}\sum_{k=0}^{K}\frac{\left|f^{(k)}(0)\right|}{k!}\to C\ \ \ \ \text{as }\ \ \ K\to\infty, \label{sum3}
\end{align}
where (\ref{sum3}) has the same bound as (\ref{sum1}). The second term is similarly to (\ref{S2sum}) as
\begin{align}
S_2=\left| \sum^{M}_{n=L}\frac{f^{(K+1)}(x_n')}{(K+1)!}\left(\frac{n}{N}\right)^{K+1}\e^{in\phi}\right|\leq  \left|\frac{f^{(K+1)}(x^{\star})}{(K+1)!}\right| (M-L),
\end{align}
which approaches 0 as $K\to\infty$ while $N=4K/C_{\phi}$. This completes the proof.
\end{proof}

\setcounter{thm}{1}
\begin{thm}\label{mainThm}
For a time dependent Hermitian matrix $H(t)$, where the real and imaginary parts of each element are analytic functions of $t$, and a unitary matrix $U$ with a spectral decomposition $\sum_{\mu=1}^m\e^{i\phi_{\mu}}P_{\mu}$, where $\e^{i\phi_{\mu}}\neq \e^{i\phi_{\nu}}$ whenever $\mu\neq\nu$ and $P_{\mu}$ projectors,
\begin{align}
\left|\left| U^{\dagger N}\prod_{\ell=0}^{N-1}U\e^{-iH(\ell t/N) t/N}-\prod_{\ell=0}^{N-1} \e^{-i\sum_{\mu=1}^m\frac{t}{N}P_{\mu}H\left(\frac{\ell t}{N}\right)P_{\mu}}\right|\right|=O\left(\frac{1}{N}\right),
\end{align}
or equivalently 
\begin{align}
\lim_{N\to\infty}U^{\dagger N}\prod_{\ell=0}^{N-1}U\e^{-iH(\ell t/N) t/N}=\mathcal{T}\e^{-i\sum_{\mu=1}^m\int_0^t P_{\mu} H(s) P_{\mu}ds},
\end{align}
where $\mathcal{T}$ is the descending time ordering operator.
\end{thm}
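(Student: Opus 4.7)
The plan is to move to an "interaction picture" in which the $U$'s are absorbed into conjugations of $H$, then use the spectral decomposition of $U$ and Lemma \ref{lm2} to show that all off-diagonal (with respect to the $P_\mu$'s) contributions vanish as $N\to\infty$.

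First, by iteratively using the identity $U\e^{-iA\,dt}=\e^{-iUAU^{\dagger}dt}\,U$ and pushing the $U$'s rightward through the product, one obtains
\begin{equation}
U^{\dagger N}\prod_{\ell=0}^{N-1}U\e^{-iH(\ell t/N)\,t/N}=\prod_{\ell=0}^{N-1}\e^{-i\tilde H_\ell\,t/N},\qquad \tilde H_\ell:=U^{\dagger\ell}H(\ell t/N)U^{\ell},
\end{equation}
the product taken in descending order. Inserting $U=\sum_\mu \e^{i\phi_\mu}P_\mu$ decomposes
\begin{equation}
\tilde H_\ell=D_\ell+O_\ell,\quad D_\ell:=\sum_{\mu}P_\mu H(\ell t/N)P_\mu,\quad O_\ell:=\sum_{\mu\neq\nu}\e^{i\ell(\phi_\nu-\phi_\mu)}P_\mu H(\ell t/N)P_\nu,
\end{equation}
where the phases $\phi_\nu-\phi_\mu$ appearing in $O_\ell$ are all nonzero mod $2\pi$. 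The "diagonal" piece $D_\ell$ is a slowly varying sample of the continuous function $D(s)=\sum_\mu P_\mu H(s)P_\mu$, while $O_\ell$ carries the fast oscillations in $\ell$.

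Next I would expand both $\prod_\ell \e^{-i\tilde H_\ell t/N}$ and the target $\mathcal{T}\e^{-i\int_0^t D(s)\,ds}$ as Dyson series in $t$, and compare them order by order. The leading contribution to the $n$-th order of the LHS is $(-i t/N)^n\sum_{\ell_1>\cdots>\ell_n}\tilde H_{\ell_1}\cdots\tilde H_{\ell_n}$; plugging in $\tilde H=D+O$ splits this into $2^n$ sub-terms, labelled by which positions carry an $O$. The all-$D$ sub-term is a Riemann sum that converges to the $n$-th Dyson term of the target. Any sub-term with $k\geq 1$ off-diagonal factors contains, for each such factor, a sum over its index of the form $\sum_{\ell_i}f(\ell_i/N)\e^{i\ell_i(\phi_\nu-\phi_\mu)}$, with the remaining indices frozen; since the matrix entries of $H(t)$ are analytic, the summand is analytic in $\ell_i/N$, so the partial-sum version of Lemma \ref{lm2} bounds each such sum by an $N$-independent constant instead of $O(N)$. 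The remaining $n-k$ non-oscillating indices contribute the usual $O(N^{n-k})$ Riemann factor, so when combined with the $(t/N)^n$ prefactor every mixed sub-term is $O(1/N^k)$ and hence vanishes in the limit.

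Finally, the uniform bound $\|\tilde H_\ell\|_\infty\leq\sup_{s\in[0,t]}\|H(s)\|_\infty$ makes both Dyson series absolutely convergent and uniformly majorized in $N$ by $\sum_n\|H\|_\infty^n t^n/n!$, so the term-by-term limit lifts to operator-norm convergence with rate $O(1/N)$ by a dominated-convergence argument. The main obstacle is the combinatorial bookkeeping in the third paragraph: Lemma \ref{lm2} treats only one oscillating index at a time, while the constraint $\ell_1>\cdots>\ell_n$ couples the summation ranges. I would resolve this by performing the oscillating sums iteratively with the non-oscillating indices held fixed as parameters, verifying at each step that the frozen summand is still an analytic function of the current running index so that the partial-sum bound of Lemma \ref{lm2} applies; this is precisely where the analyticity of $H(t)$ in the hypothesis is indispensable.
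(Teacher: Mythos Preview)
Your approach matches the paper's: pass to the interaction picture $\tilde H_\ell=U^{\dagger\ell}H(\ell t/N)U^\ell$, expand as a time-ordered series, split each order into block-diagonal and off-block-diagonal pieces via the $P_\mu$, and control the latter with Lemma~\ref{lm2}. The paper inserts projectors $P_{\mu_0},\ldots,P_{\mu_k}$ between every pair of factors rather than writing $\tilde H=D+O$, but this is the same decomposition reorganized.

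The one overreach is your claim that a sub-term carrying $k$ off-diagonal factors is $O(1/N^k)$, obtained by iterating Lemma~\ref{lm2} over each oscillating index in turn. That iteration does not work as stated: Lemma~\ref{lm2} returns a uniform \emph{bound} on the partial sum, not an analytic function of the remaining indices, so after one application there is nothing of the required form $f(\ell_{j'}/N)\e^{i\ell_{j'}\phi}$ to feed into the next. The paper accordingly applies Lemma~\ref{lm2} only \emph{once} per order-$k$ remainder: since any off-diagonal configuration has at least one adjacent pair $\mu_{j-1}\neq\mu_j$, the partial sum over that single $\ell_j$ is $O(1)$, and the other $k-1$ indices are bounded trivially by $N^{k-1}$. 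This yields an $O(1/N)$ contribution at every order, and the sum over $k$ (with the $1/k!$) is still $O(1/N)$. Weaken your per-term estimate from $O(1/N^k)$ to $O(1/N)$ and your argument goes through; the stronger bound is neither needed nor available from Lemma~\ref{lm2} alone. You should also explicitly dispose of the coincident-index pieces of the time-ordered product (some $\ell_i=\ell_{i+1}$) that your strictly-ordered Dyson sum omits; the paper handles these by direct counting and they are already $O(1/N)$.
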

\begin{proof}
First we rewrite the product as 
\begin{align}
&U^{\dagger N}\prod_{\ell=0}^{N-1}U\e^{-iH(\ell t/N) t/N} \nonumber\\
&=U^{\dagger N-1}\e^{-iH((N-1) t/N) t/N} U^{N-1} U^{\dagger N-2}\e^{-iH((N-2) t/N) t/N} U^{N-2}\cdots U^{\dagger}\e^{-iH(t/N) t/N} U\e^{-iH(0)t/N}  \nonumber\\
&=\e^{-iH_{N-1} t/N}\e^{-iH_{N-2}t/N}\cdots \e^{-iH_{0}t/N}=\prod_{\ell=0}^{N-1}\e^{-iH_{\ell}t/N},
\end{align}
where
$$
H_{\ell}\equiv U^{\dagger \ell}H(\ell t/N)U^{\ell }.
$$
The product of exponentials can be Taylor expanded and compared at the same order of $t/N$, i.e.,
\begin{align}
\prod_{\ell=0}^{N-1}\e^{-iH_{\ell}t/N} &=\sum_{k_0,\cdots,k_{N-1}=0}^{\infty}\left(\frac{-it}{N}\right)^{k_{N-1}}\cdots\left(\frac{-it}{N}\right)^{k_0}\frac{H^{k_{N-1}}_{N-1}}{k_{N-1}!}\cdots\frac{H^{k_0}_0}{k_0!} \nonumber\\
&=\sum_{k=0}^{\infty}\frac{1}{k!}\left(\frac{-it}{N}\right)^k\sum_{k_0+\cdots+k_{N-1}=k}\frac{k!}{k_0!\cdots k_{N-1}!}H^{k_{N-1}}_{N-1}\cdots H^{k_0}_0\nonumber\\
&=\sum_{k=0}^{\infty}\frac{1}{k!}\left(\frac{-it}{N}\right)^k \mathcal{T}\left[\left(\sum_{\ell=0}^{N-1}H_{\ell}\right)^k\right] \label{multinomial1}\\
&=I+\sum_{k=1}^{\infty}\frac{1}{k!}\left(\frac{-it}{N}\right)^k\sum_{\ell_1,\cdots,\ell_{k}=0}^{N-1}\mathcal{T}(H_{\ell_1}\cdots H_{\ell_k}), \label{multinomial2}
\end{align}
where $\mathcal{T}$ is the descending ordering operator, i.e, $\mathcal{T}(H_{j_1}\cdots H_{j_{k}})=H_{r_1}\cdots H_{r_k}$ with $r_1\geq \cdots \geq r_k$ and $\{r_q\}_{q=1}^k=\{j_q\}_{q=1}^k$, and (\ref{multinomial1}) is from the multinomial theorem. Recall that the spectral decomposition of $U$ is $\sum_{\mu=1}^{m}\e^{i\phi_{\mu}}P_{\mu}$. Now we insert the spectral projectors $P_{\mu}$ and separate (\ref{multinomial2}) into two sums, i.e.,
\begin{align}
&\prod_{\ell=0}^{N-1}\e^{-iH_{\ell}t/N} \nonumber\\
&=I+\sum_{k=1}^{\infty}\frac{1}{k!}\left(\frac{-it}{N}\right)^k\sum_{\ell_1,\cdots,\ell_{k}=0}^{N-1} \left[\sum_{\mu=1}^m \mathcal{T}(P_{\mu}H_{\ell_1}P_{\mu}\cdots P_{\mu}H_{\ell_k}P_{\mu})+ \sum_{(\mu_0,\cdots,\mu_{k})\in \sigma}  \mathcal{T}(P_{\mu_0}H_{\ell_1}P_{\mu_1}\cdots P_{\mu_{k-1}}H_{\ell_k}P_{\mu_k}) \right], \label{term1}
\end{align}
where $\mathcal{T}$ only order $\ell_j$ but not $\mu_j$, and
$$
\sigma=\{\mu_0,\cdots,\mu_{k}\in\{1,\cdots,m\} \ | \ \text{excluding any case with } \mu_0=\mu_1\cdots=\mu_{k}\}.
$$ 
 Recall from the definition of $H_{\ell}$, we have $P_{\mu}H_{\ell}P_{\mu}=P_{\mu}H(t\ell/ N)P_{\mu}$. Inserting this equality back to the first sum in (\ref{term1}) and deriving backwards along (\ref{multinomial2}) and (\ref{multinomial1}) by replacing each $H_{\ell}$ with $P_{\mu}H(t\ell/ N)P_{\mu}$, we have 
\begin{align}
\prod_{\ell=0}^{N-1}\e^{-iH_{\ell}t/N}&=\sum_{\mu=1}^m P_{\mu}\prod_{\ell=0}^{N-1} \e^{-i\frac{t}{N}P_{\mu}H\left(\frac{\ell t}{N}\right)P_{\mu}} P_{\mu}\nonumber\\
&\ \ \ + \sum_{k=1}^{\infty}\frac{1}{k!}\left(\frac{-it}{N}\right)^k \underbrace{\sum_{\ell_1,\cdots,\ell_{k}=0}^{N-1}  \sum_{(\mu_0,\cdots,\mu_{k})\in \sigma}  \mathcal{T}(P_{\mu_0}H_{\ell_1}P_{\mu_1}\cdots P_{\mu_{k-1}}H_{\ell_k}P_{\mu_k})}_{\equiv R_k}, \label{term2}
\end{align}
where the product $\prod$ assumes descending order. One can see that the first term is a unitary operator block-diagonal in $U$'s spectrum $\{P_{\mu}\}$. The second term will decay to 0 as $N$ goes to infinity. This is mainly due to Lemma \ref{lm2} (applied to the real and imaginary part of each element of $H$). For each $\mu\neq \nu$ (implying $\e^{i\phi_{\mu}}\neq \e^{i\phi_{\nu}}$), we have
\begin{align}
 \left|\left|P_{\mu}\left(\sum_{\ell=0}^{N-1}H_{\ell}\right)P_{\nu}\right|\right|=\left|\left| P_{\mu}\left[\sum_{\ell=0}^{N-1}\e^{i\ell(\phi_{\nu}-\phi_{\mu})}H\left(t\ell /N\right)\right]P_{\nu}\right|\right| \leq \left|\left| \sum_{\ell=0}^{N-1}\e^{i\ell(\phi_{\nu}-\phi_{\mu})}H\left(t\ell /N\right)\right|\right| \leq C \label{bound1}
\end{align}
for some constant $C$. For convenience, we use operator norm (or max norm) for $||\cdot||$. In particular, it says that there exists a constant $C$ that bounds $\left|\left|P_{\mu}\left(\sum_{\ell=0}^{N-1}H_{\ell}\right)P_{\nu}\right|\right|$ for any $N$. In general, $C$ depends on $\mu$ and $\nu$. However, we can choose $C$ to be the largest among all $\mu$, $\nu$.   

To evaluate $R_k$, we first break the sum with the ordered products into different sets of restricted region, i.e., for each set of $(\mu_0,\cdots,\mu_{k})\in \sigma$,
\begin{align}
&\sum_{\ell_1,\cdots,\ell_{k}=0}^{N-1}\mathcal{T}(P_{\mu_0}H_{\ell_1}P_{\mu_1}\cdots P_{\mu_{k-1}}H_{\ell_k}P_{\mu_k}) \nonumber\\
&=k!\sum_{(\ell_1,\cdots,\ell_k)\in\Theta_k}P_{\mu_0}H_{\ell_1}P_{\mu_1}\cdots P_{\mu_{k-1}}H_{\ell_k}P_{\mu_k} \nonumber \\
&\ \ \ +\frac{k!}{2!}\sum_{(\ell_1,\cdots,\ell_k)\in\Theta_{k-1}}P_{\mu_0}H_{\ell_1}P_{\mu_1}\cdots P_{\mu_{k-1}}H_{\ell_k}P_{\mu_k} \nonumber\\
&\ \ \ +\frac{k!}{3!}\sum_{(\ell_1,\cdots,\ell_k)\in\Theta_{k-2}}P_{\mu_0}H_{\ell_1}P_{\mu_1}\cdots P_{\mu_{k-1}}H_{\ell_k}P_{\mu_k} \nonumber \\
&\ \ \ \ \ \ \ \ \ \ \ \ \ \ \ \ \ \ \ \ \ \ \ \ \ \ \ \ \vdots \ \ \ \ \ \ \ \ \ \ \ \ \ \ \ \ \ \ \ \ \ \ \ \ \ \ \ \ \nonumber \\
&\ \ \ +\frac{k!}{k!}\sum_{(\ell_1,\cdots,\ell_k)\in\Theta_{1}}P_{\mu_0}H_{\ell_1}P_{\mu_1}\cdots P_{\mu_{k-1}}H_{\ell_k}P_{\mu_k},
\end{align}
where 
\begin{align}
&\Theta_{k}=\{\ell_1,\cdots,\ell_k\in\{0,1,\cdots,N-1\}\ |\ \ell_1>\ell_2>\cdots> \ell_k\} \nonumber\\
&\Theta_{k-1}=\{\ell_1,\cdots,\ell_k\in\{0,1,\cdots,N-1\}\ |\ \ell_1>\cdots> \ell_k,\  \text{where 1 ``$>$''  is replaced by ``=''}\} \nonumber\\
&\Theta_{k-2}=\{\ell_1,\cdots,\ell_k\in\{0,1,\cdots,N-1\}\ |\ \ell_1>\cdots> \ell_k,\ \text{where 2 ``$>$'' are replaced by ``=''}\} \nonumber\\
&\ \ \ \ \ \ \ \ \ \ \ \ \ \ \ \ \ \ \ \ \ \ \ \ \ \ \ \ \vdots \ \ \ \ \ \ \ \ \ \ \ \ \ \ \ \ \ \ \ \ \ \ \ \ \ \ \ \ \nonumber \\
&\Theta_{1}=\{\ell_1,\cdots,\ell_k\in\{0,1,\cdots,N-1\}\ |\ \ell_1=\ell_2\cdots= \ell_k\}. 
\end{align}
For example, let $k=4$, then $(5=5>4>2)\in\Theta_{4-1}$, $(6>5=5>2)\in\Theta_{4-1}$, $(15>10>3=3)\in\Theta_{4-1}$, etc.. The prefactor $k!/q!$ means there are $k!/q!$ ways to allocate $k$ numbers to $k$ slots with $q$ of them being the same. Note that
the number of elements in each set is $|\Theta_k|={{N}\choose{k}}$, $|\Theta_{k-1}|={{N}\choose{k-1}}$, etc..
Therefore, we have 
\begin{align}
||R_k||&=\left|\left|\sum_{(\mu_0,\cdots,\mu_{k})\in \sigma}\sum_{\ell_1,\cdots,\ell_{k}=0}^{N-1}\mathcal{T}(P_{\mu_0}H_{\ell_1}P_{\mu_1}\cdots P_{\mu_{k-1}}H_{\ell_k}P_{\mu_k}) \right|\right| \nonumber \\
&\leq \sum_{(\mu_0,\cdots,\mu_{k})\in \sigma}\Bigg(\left|\left|k!\sum_{(\ell_1,\cdots,\ell_k)\in\Theta_k}P_{\mu_0}H_{\ell_1}P_{\mu_1}\cdots P_{\mu_{k-1}}H_{\ell_k}P_{\mu_k}\right|\right| \nonumber \\
&\ \ \ \ \ \ \ \ \ \ \ \  \ \ \ \ \ \ \ \ \  +\left|\left|\frac{k!}{2!}\sum_{(\ell_1,\cdots,\ell_k)\in\Theta_{k-1}}P_{\mu_0}H_{\ell_1}P_{\mu_1}\cdots P_{\mu_{k-1}}H_{\ell_k}P_{\mu_k}\right|\right| \nonumber \\
&\ \ \ \ \ \ \ \ \ \ \ \ \ \ \ \ \ \ \ \ \ \ \ \ \ \ \ \ \ \ \ \ \ \ \ \ \ \ \ \ \ \ \ \ \ \ \ \ \ \ \ \ \vdots \ \ \ \ \ \ \ \ \ \ \ \ \ \ \ \ \ \ \ \ \ \ \ \ \ \ \ \ \nonumber \\
&\ \ \ \ \ \ \ \ \ \ \ \  \ \ \ \ \ \ \ \ \  +\left|\left|\frac{k!}{k!}\sum_{(\ell_1,\cdots,\ell_k)\in\Theta_{1}}P_{\mu_0}H_{\ell_1}P_{\mu_1}\cdots P_{\mu_{k-1}}H_{\ell_k}P_{\mu_k}\right|\right|\Bigg) \nonumber \\
&\leq \sum_{(\mu_0,\cdots,\mu_{k})\in \sigma}\Bigg(\left|\left|k!\sum_{(\ell_1,\cdots,\ell_k)\in\Theta_k}P_{\mu_0}H_{\ell_1}P_{\mu_1}\cdots P_{\mu_{k-1}}H_{\ell_k}P_{\mu_k}\right|\right| \nonumber \\
&\ \ \ \ \ \ \ \ \ \ \ \  \ \ \ \ \ \ \ \ \  +\frac{k!}{2!}\sum_{(\ell_1,\cdots,\ell_k)\in\Theta_{k-1}}||P_{\mu_0}||||H_{\ell_1}||||P_{\mu_1}||\cdots ||P_{\mu_{k-1}}||||H_{\ell_k}||||P_{\mu_k}|| \nonumber \\
&\ \ \ \ \ \ \ \ \ \ \ \ \ \ \ \ \ \ \ \ \ \ \ \ \ \ \ \ \ \ \ \ \ \ \ \ \ \ \ \ \ \ \ \ \ \ \ \ \ \ \ \ \vdots \ \ \ \ \ \ \ \ \ \ \ \ \ \ \ \ \ \ \ \ \ \ \ \ \ \ \ \ \nonumber \\
&\ \ \ \ \ \ \ \ \ \ \ \  \ \ \ \ \ \ \ \ \  +\frac{k!}{k!}\sum_{(\ell_1,\cdots,\ell_k)\in\Theta_{1}}||P_{\mu_0}||||H_{\ell_1}||||P_{\mu_1}||\cdots ||P_{\mu_{k-1}}||||H_{\ell_k}||||P_{\mu_k}|| \Bigg) \nonumber \\
&\leq\sum_{(\mu_0,\cdots,\mu_{k})\in \sigma}\Bigg(\left|\left|k!\sum_{(\ell_1,\cdots,\ell_k)\in\Theta_k}P_{\mu_0}H_{\ell_1}P_{\mu_1}\cdots P_{\mu_{k-1}}H_{\ell_k}P_{\mu_k}\right|\right| \nonumber \\
&\ \ \ \ \ \ \ \ \ \ \ \  \ \ \ \ \ \ \ \ \  +\underbrace{\frac{k!}{2!} {{N}\choose{k-1}}}_{<{{k}\choose{1}}N^{k-1}} H_{\max}^k+\underbrace{\frac{k!}{3!} {{N}\choose{k-2}} }_{<{{k}\choose{2}}N^{k-2}}H_{\max}^k+\cdots+ \underbrace{\frac{k!}{k!} {{N}\choose{1}}}_{<{{k}\choose{k-1}}N} H_{\max}^k\Bigg) \nonumber \\
&\leq\sum_{(\mu_0,\cdots,\mu_{k})\in \sigma}\Bigg(\left|\left|k!\sum_{(\ell_1,\cdots,\ell_k)\in\Theta_k}P_{\mu_0}H_{\ell_1}P_{\mu_1}\cdots P_{\mu_{k-1}}H_{\ell_k}P_{\mu_k}\right|\right|+2^kN^{k-1}H^{k}_{\max}\Bigg). \label{term0}
\end{align}
The remaining term is also $O(N^{k-1})$ due to Lemma \ref{lm2}. This can be shown by first observing that for any $(\mu_0,\cdots,\mu_{k})\in \sigma$, there exists a pair $\mu_{j-1}\neq \mu_{j}$. Therefore we can use Lemma \ref{lm2} to factor out a constant for the $\ell_j$ term, i.e.,
\begin{align}
&\left|\left|k!\sum_{(\ell_1,\cdots,\ell_k)\in\Theta_k}P_{\mu_0}H_{\ell_1}P_{\mu_1}\cdots P_{\mu_{k-1}}H_{\ell_k}P_{\mu_k}\right|\right| \nonumber\\
&\leq k!  \sum_{(\ell_1,\cdots, \ell_{j-1},\ell_{j+1},\cdots\ell_k)\in\Theta^{(\ell_j)}_k}\left|\left|\sum_{\ell_j\geq\ell_{j+1}}^{\ell_j\leq\ell_{j-1}}P_{\mu_0}H_{\ell_1}P_{\mu_1}\cdots P_{\mu_{k-1}}H_{\ell_k}P_{\mu_k}\right|\right|  \nonumber\\
&\leq k!  \sum_{(\ell_1,\cdots, \ell_{j-1},\ell_{j+1},\cdots\ell_k)\in\Theta^{(\ell_j)}_k} \left|\left|P_{\mu_{j-1}}\left(\sum_{\ell_j\geq\ell_{j+1}}^{\ell_j\leq\ell_{j-1}}H_{\ell_j}\right)P_{\mu_j}\right|\right| H^{k-1}_{\max} \nonumber\\
&\leq k! C H^{k-1}_{\max}\sum_{(\ell_1,\cdots, \ell_{j-1},\ell_{j+1},\cdots\ell_k)\in\Theta^{(\ell_j)}_k}= k! C H^{k-1}_{\max} {{N}\choose{k-1}}\leq k N^{k-1}C H^{k-1}_{\max}, \label{term3}
\end{align}
where $\Theta_{k}^{(\ell_j)}$ indicates $\Theta_{k}$ removing the entry $\ell_j$. Note that (\ref{term3}) uses Lemma \ref{lm2}, which says there exists a constant that bounds \emph{any} partial sum. Inserting (\ref{term3}) back to (\ref{term0}), we get
\begin{align}
||R_k||&\leq(k N^{k-1}C H^{k-1}_{\max} + 2^kN^{k-1}H^{k}_{\max})\left(\sum_{(\mu_0,\cdots,\mu_{k})\in \sigma}\right) \nonumber\\
&\leq  N^{k-1} H^{k}_{\max}2^k\underbrace{\left(\frac{C}{H_{\max}}+1\right)}_{\equiv C'} \left(\sum_{\mu_0,\cdots,\mu_{k}=0}^m \right)= C' N^{k-1} H^{k}_{\max}2^k m^{k+1}. \label{Rbound}
\end{align}
Combining (\ref{Rbound}) with (\ref{term2}), we have
\begin{align}
&\left|\left|\prod_{\ell=0}^{N-1}\e^{-iH_{\ell}t/N}-\sum_{\mu=1}^m P_{\mu}\prod_{\ell=0}^{N-1} \e^{-i\frac{t}{N}P_{\mu}H\left(\frac{\ell t}{N}\right)P_{\mu}} P_{\mu} \right|\right|=\left|\left| U^{\dagger N}\prod_{\ell=0}^{N-1}U\e^{-iH(\ell t/N) t/N}-\prod_{\ell=0}^{N-1} \e^{-i\sum_{\mu=1}^m\frac{t}{N}P_{\mu}H\left(\frac{\ell t}{N}\right)P_{\mu}}\right|\right| \nonumber\\
&=\left|\left| \sum_{k=1}^{\infty}\frac{1}{k!}\left(\frac{-it}{N}\right)^k R_k\right|\right|\leq \sum_{k=1}^{\infty}\frac{1}{k!}\left(\frac{t}{N}\right)^k ||R_k|| \nonumber\\
&\leq \sum_{k=1}^{\infty}\frac{1}{k!}\left(\frac{t}{N}\right)^k C' N^{k-1} H^{k}_{\max}2^k m^{k+1} \nonumber \\
&=\frac{C'm}{N}\sum_{k=1}^{\infty}\frac{(2mtH_{\max})^k}{k!}\leq\frac{C'm}{N}\e^{2mtH_{\max}}=O\left(\frac{1}{N}\right).
\end{align} 

\end{proof}

\end{document}